\newcommand{\LyX}{L\kern-.1667em\lower.25em\hbox{Y}\kern-.125emX\spacefactor1000}
\title{The Economics of Competition and Cooperation Between MNOs and MVNOs}
\author{\IEEEauthorblockN{Mohammad Hassan Lotfi}
\IEEEauthorblockA{Department of Electrical and System Engineering\\
University of Pennsylvania\\
Philadelphia, PA, 19104 \\
Email: lotfm@seas.upenn.edu}
\and
\IEEEauthorblockN{Saswati Sarkar}
\IEEEauthorblockA{Department of Electrical and System Engineering\\
University of Pennsylvania\\
Philadelphia, PA, 19104 \\
Email: swati@seas.upenn.edu}
}
\begin{document}
\maketitle
\newtheorem{lemma}{Lemma}
\newtheorem{note}{Note}
\newtheorem{property}{Property}
\newtheorem{theorem}{Theorem}
\newtheorem{definition}{Definition}
\newtheorem{corollary}{Corollary}
\newtheorem{remark}{Remark}
\newtheorem{assumption}{Assumption}

\begin{abstract}
In this work, we consider the economics of the interaction between Mobile Virtual Network Operators (MVNOs) and Mobile Network Operators (MNOs). We investigate the incentives of an MNO for offering some of her resources to an MVNO instead of using the resources for her own. We formulate the problem as a sequential game. We consider a market with one MNO and one MVNO, and a continuum of undecided end-users. We assume that EUs have different preferences for the MNO and the MVNO. These preferences can be because of the differences in the service they are offering or the reluctance of  an EU to buy her plan from one of them.  We assume that the preferences also depend on the investment level the MNO and the MVNO.  We show that there exists a unique interior  SPNE, i.e. the SPNE by which both SPs receive a positive mass of EUs, and characterize it. We also consider a benchmark case in which the MNO and the MVNO do not cooperate, characterize the unique SPNE of this case, and compare the results of our model to the benchmark case to assess the incentive of the MNO to invest in her infrastructure and to offer it to the MVNO. 
\end{abstract}

\section{Introduction}

Traditionally, wireless services have been offered by Service Providers (SPs) that own the infrastructure they are operating on.  Nowadays SPs are divided into (i) Mobile Network Operators (MNOs) that own the infrastructure, and (ii) 
Mobile Virtual Network Operators (MVNOs)  that do not own the infrastructure they are operating on, and use the resources of one or more MNOs based on a business contract. MVNOs can distinguish their plans from MNOs by  bundling their service with other products,  offering different pricing plans  for End-Users (EUs), or building  a good reputation through a better customer service. In recent years, the number of MVNOs is rapidly growing. According to \cite{GSMA}, between June 2010 and June 2015, the number of MVNOs increased by 70 percent worldwide,  reaching 1,017 as of June 2015.  Even some MNOs developed their own MVNOs. An example of which is Cricket wireless which is owned by At\&t and offers a prepaid wireless service to EUs.

In this work, we consider the economics of the interaction between  MVNOs and MNOs.  We investigate the incentives of an MNO for offering some of her resources to an MVNO instead of using the resources for her own. Thus, we consider competition and cooperation between an MNO  and an MVNO. Note that it is not apriori known  how much an MNO is willing to invest on the infrastructure and how much an MVNO is willing to lease. More importantly, it is not apriori clear that under what conditions the MNO prefers to generate her revenue through the MVNO by leasing the resources to her, and therefore letting the MVNO to attract EUs.

Many works have considered the economics of resource/spectrum sharing and the subsequent profit sharing between SPs. Examples are \cite{cano2016cooperative,berry2013nature,le2009pricing,singh2012cooperative,korcak2012collusion
,banerjee2009voluntary,lotfi2016uncertain,sponsoring_journal}. In these works, authors model the environment using game theory and seek to provide intuitions for the pricing schemes, the split of EUs/benefits between SPs, and the investment level of different SPs. In this work, however, as mentioned before, we focus on the competition between the MNOs and MVNOs.

We consider a market with one MNO (e.g. At\&t) and  one MVNO (e.g. Google's Project Fi). We assume that the MVNO is active, i.e. has already some resources, and is  willing to lease additional resources from the MNO   in exchange of a fee. The MNO decides on the investment level. Subsequently, knowing this decision, the MVNO decides on the number of resources she wants to lease from the MNO. Then, simultaneously, both MNO and MNVO decide on their pricing strategies for EUs.   Finally, EUs choose one of the MNO or MVNO to buy the wireless plan from. We assume the per resource fee that the MVNO pays to the MNO to be fixed and discuss about the framework for determining this fee.

To model EUs, we consider a continuum of undecided EUs that decide which of the SPs they want to buy their wireless plan from. We assume that EUs have different preferences for each SP. These preferences can be because of different services that SPs offer. For example, the MVNO can bundle the wireless service with a free or cheap international call plan (through VoIP) to make her service more favorable for some EUs. Moreover, the preferences can be because of the reluctance of  an EU to buy her wireless plan from a particular SP (e.g. an  SP with an infamous customer service). We assume that the preferences also depend on the investment level of the SPs. In other words, the higher the investment level of an SP, the lower would be the reluctance of an EU for that SP. 

Different preferences of EUs for SPs enables the possibility of different outcomes for the market. For example, the MVNO who bundles her service with free international call may attract some of EUs even with a higher price than what MNO is offering. Thus,  instead of competing for EUs (by lowering her price), the MNO can lease some of her resources to this MVNO and receives her share of profit through the MVNO. 

We formulate the game as a sequential game, and seek the
Sub-game Perfect Nash Eequilibrium (SPNE) of the sequential
game using backward induction. We show that there exists a unique interior  SPNE, i.e. the SPNE by which both SPs receive a positive mass of EUs, and characterize it.  Moreover, we consider a benchmark case in which the MNO does not invest in her infrastructure, and subsequently the MVNO cannot lease any resources from the MNO. We  characterize the unique SPNE of this case. We  compare the results of our model to the benchmark case to assess the incentive of the MNO to invest in her infrastructure and to offer it to the MVNO. 

Analytic and numerical results reveal that although the number of EUs and the price that the MNO quotes for EUs might be lower than those of the MVNO, the MNO would still generate a higher payoff than the MVNO if the per resource fee that the MVNO pays to the MNO is  sufficiently higher than the marginal cost of investment on the infrastructure.  In addition, results reveal that if the MNO knows that EUs are reluctant joining her or have a high preference for the MVNO, then a better strategy for her would be to invest in her infrastructure, lease a portion (or the whole) of the new resources to the MVNO, and collect the benefit through the fees she charges for the resources leased to the MVNO.

The rest of the paper is organized as follows. In Section~\ref{section:model}, we present the model. In Section~\ref{section:analysis}, we find the SPNE strategies. We present the SPNE outcome in Section~\ref{section:outcome}. In Section~\ref{section:benchmark}, we prove the results of the benchmark case. Finally, in Section~\ref{section:numerical}, we present the numerical results and discuss about the results.

\section{Model}\label{section:model}
We consider one MNO and one MVNO that compete in attracting a pool of undecided EUs.  We assume that the MVNO is active, i.e. has already some resources. Thus, the model is set up such that even when the MVNO does not lease any resources from the MNO, it can still attract some of EUs. 

\subsection*{SPs:} 
We denote the MNO by SP$_{L}$ ($L$ represents Leader, since this SP is the leader of the game), and denote the MVNO by SP$_{F}$ ($F$ represents Follower, since this SP is the follower of this leader/follower game). SP$_L$ owns the infrastructure, invests in her infrastructure to attract EUs, and can lease parts of the new resources to SP$_F$ in exchange of money. 

We denote the fee per resources that SP$_{F}$ pays to SP$_{L}$ by $s$. The utility of SP$_F$ is increasing with the revenue from EUs, which depends on the number of EUs ($n_F$) and the access fee ($p_F$). This utility is decreasing with  respect to the fee she pays to SP$_L$ to reserve resources, i.e. $sI^2_F$, where $I_F$ is the number of resources that SP$_F$ reserves from SP$_L$:
\begin{equation} \label{equ:payoffF}
\pi_F=n_F(p_F-c)-sI^2_F
\end{equation}
, where $c$ is the marginal cost associated with each user. Note that naturally, we expect the cost of investing in infrastructure to be convex, i.e. the cost of investment per resources increases with the amount of resources. For simplicity in analysis, we consider the cost of reserving resources  to be  quadratic with respect to $I_F$.\footnote{The overall intuitions of the model are expected to be the same in the case of considering any convex function of $I_F$}

The utility of SP$_L$ is:
\begin{equation}\label{equ:payoffL}
\pi_L=n_L(p_L-c)+sI^2_F-\gamma I^2_L
\end{equation}
where $n_L$, $p_L$, and $I_L$ are the number of EUs with SP$_L$, the access fee quoted by SP$_L$ for EUs, and the number of resources that SP$_L$ add to her infrastructure. Note that $\gamma$ is the marginal cost of investment. Thus, the utility of SP$_L$ is increasing with the revenue from EUs 
($n_L(p_L-c)$) and the fee received from SP$_F$ ($sI^2_F$), and  is decreasing with  respect to the investment cost ($\gamma I^2_L$). Note that we consider the cost of investment to be a quadratic function of $I_L$.\footnote{Intuitions of the model are expected to be the same when considering any convex function of $I_L$.}

Trivially, we assume that the fee per resources ($s$) is greater than or equal to the marginal cost of investment on the infrastructure ($\gamma$), i.e. $s\geq \gamma$. Also, note that $I_F\leq I_L$. To have a non-trivial problem, we also assume $I_L>0$ and $I_F\geq 0$. 

The strategies of SP$_L$ is to choose the access fee for EUs ($p_L$) and the level of investment ($I_L$). The strategies of SP$_F$ is to choose the access fee for EUs ($p_F$) and the number of resources she leases from SP$_L$ ($I_F$). We assume that the per resource fee $s$ is pre determined, possibly through a bargaining framework between SP$_L$ and SP$_F$. 

\subsection*{EUs:}  The strategy of an EU is to choose one of the SPs to buy wireless plan from. We model the EUs using a hotelling model. We assume that the SP$_L$ is located at 0, SP$_F$ is located at 1, and EUs are distributed uniformly along the unit interval $[0,1]$. The closer an EU to an SP, the more this EU prefers this SP to the other. Note that the notion of closeness and distance is used to model the preference of EUs, and may not be the same as  physical distance. 

Specifically, the EU located at $x\in[0,1]$ incurs a \emph{transport cost} of $t_L x$  (respectively, $t_F(1-x)$) when joining SP$_L$ (respectively, SP$_F$), where $t_L$ (respectively, $t_F$) is the marginal transport cost for SP$_L$ (respectively, SP$_F$). In sum, we  consider $t_L$ and $t_F$ as the reluctance of EUs for connecting to SP$_L$ and SP$_F$, respectively. 


We  assume the transport costs of EUs to depend on the investment level of  SPs. Thus, we assume that the higher the investment level of an SP in comparison to the other SP, the higher would be the reluctance of EUs for joining the other SP. This model is used to captures the factor of quality in the decision of EUs. Thus, the transport cost of  SP$_L$ and SP$_F$ are considered to be $t_L=\frac{I_F}{I_L}$ and $t_F=1-t_L=\frac{I_L-I_F}{I_L}$, respectively.

Therefore, the utility of an EU located at distance $x_j$ of  SP$_{j}$, $j\in\{L,F\}$ is:
\begin{equation}
u_{x_j}=v^*-t_jx_j-p_j
\end{equation}
, where $v^*$ is a common valuation that captures the  value of buying a wireless plan for EUs regardless of the SP chosen.

\subsection*{Formulation:}
We model the interaction between SPs and EUs using a four-stage sequential game. Naturally, we assume that SP$_L$ makes the first move and is the leader of the game. The timing and the stages of the game are as follows:
\begin{enumerate}
\item SP$_{L}$ decides on the investment on the infrastructure ($I_L$).
\item SP$_{F}$ decides on the investment, i.e. number of resources to lease from SP$_L$ ($I_F$ such that $I_F\leq I_L$).
\item SP$_{L}$ and SP$_{F}$ determine the access fees for EUs (respectively, $p_L$ and $p_F$).
\item EUs decide to subscribe to one of the SPs.
\end{enumerate}

We assumed the selection of investments by SPs ($I_L$ and $I_F$) to happen before the selection of prices for EUs ($p_L$ and $p_F$) since investment decisions are long-term decision. These decisions are expected to be kept constant over longer time horizons in comparison to pricing decisions. 

In the sequential game framework,  we seek a \emph{Subgame Perfect Nash Equilibrium}  using \emph{backward induction}:

\begin{definition}\emph{Subgame Perfect Nash Equilibrium (SPNE):}
	A strategy is an SPNE if and only if it constitutes a Nash Equilibrium (NE) of every subgame of the game. 
\end{definition} 

\begin{definition}\emph{Backward Induction:} Characterizing the equilibrium strategies starting from the last stage of the game and proceeding backward. 
\end{definition}

We also assume the full market coverage of EUs by SPs. This means that  each EU chooses exactly one SP to subscribe to.  This  assumption is common in hotelling models and is necessary to ensure competition between SPs. An equivalent assumption is to consider the common valuation $v^*$ to be sufficiently large so that the utility of EUs for buying a wireless plan  is positive regardless of the choice of SP.

 One might think that SP$_L$ can eliminate competition by not offering her resources to SP$_F$ (or simply not investing in  new infrastructure), and use the full market coverage to act as a monopoly. However, this is not the case in our model. Recall that we assumed the SP$_F$ to have resources independent of SP$_L$. Thus, regardless of the strategy of SP$_L$, SP$_F$ would be still active in the market, and the full market coverage would not lead to the monopoly of SP$_L$. 
 


\section{The Sub-Game Perfect Nash Equilibrium}\label{section:analysis}

In the backward induction, we start with Stage 4:

\subsection*{Stage 4:} 

In this subsection, we characterize the division of EUs between SPs in the  equilibrium, i.e. $n_L$ and $n_F$, using the knowledge of the strategies chosen by the  SPs in Stages 1, 2, and 3. To do so, we characterize the location of the EU that is indifferent between joining either of the SPs, $x_n$. Thus, EUs located at $[0,x_n)$ join SP$_L$, and those located at $(x_n,1]$ joins the non-neutral SP$_F$ (using the full market coverage assumption). 

The EU located at $x_n\in[0,1]$ is indifferent between connecting to SP$_L$ ans SP$_F$ if:
\begin{equation}\label{equ:xn_t_parameter}
\small
	\begin{aligned}
	&v^*-t_{F}(1-x_n)-p_{F}=v^*-t_Lx_n-p_L\\
	&\qquad \qquad \qquad  \Rightarrow x_n=\frac{t_F+p_F-p_L}{t_L+t_F}
	\end{aligned}
\end{equation}

Note that $t_F+t_L=1$. Substituting the value of $t_F$ yields:
\begin{equation}\label{equ:xn}
\small
x_n=\frac{I_L-I_F}{I_L}+p_F-p_L
\end{equation}

Thus, the fraction of EUs with each SP ($n_L$ and $n_F$) is:
\begin{equation} \label{equ:EUs_stage4}
\small
\begin{aligned} 
n_L &=
\left\{
\begin{array}{ll}
	0  & \mbox{if } x_n < 0 \\
	\frac{I_L-I_F}{I_L}+p_F-p_L & \mbox{if } 0\leq x_n \leq 1\\
	1 & \mbox{if } x_n>1
\end{array}
\right. \\
n_F&=1-n_L
\end{aligned}
\end{equation}

\subsection*{Stage 3:}

In this stage, SP$_L$ and SP$_F$ determine their  prices for EUs, $p_L$ and $p_F$, respectively, to maximize their payoff. We seek for Nash equilibrium (NE) strategies. Note that in general there might exist several NE strategies: some of them corner equilibria (an extreme case in which one of the SPs receives zero EUs) and some interior equilibria (in which both SPs receive a positive mass of EUs). In practice, the latter equilibria are expected to occur more frequently. Thus, henceforth,  we seek to characterize the interior  equilibria, i.e. when $0<n_L<1$ and $0<n_F<1$. 

Thus, we look for all NE by which  $0< x_n<1$ ($x_n$ characterized in the previous stage of the game). In this case, using \eqref{equ:payoffF}, \eqref{equ:payoffL}, and \eqref{equ:EUs_stage4}, the payoffs of SPs would be:
\begin{equation}\label{equ:payoffF_2}
\small
\pi_F=(\frac{I_F}{I_L}+p_L-p_F)(p_F-c)-sI^2_F 
\end{equation}
\begin{equation}\label{equ:payoffL_2}
\small
\pi_L=(\frac{I_L-I_F}{I_L}+p_F-p_L)(p_L-c)+sI^2_F-\gamma I^2_L
\end{equation}
\normalsize2
In the following theorem, we prove that the NE uniquely exists and  characterize it:
\begin{theorem}\label{theorem:stage 3}
The NE strategies of Stage 3 by which  $0<n_L<1$ and $0<n_F<1$ are unique, and are $p_F=c+\frac{I_L+I_F}{3I_L}$ and $p_L=c+\frac{2I_L-I_F}{3I_L}$. 
\end{theorem}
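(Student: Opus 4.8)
The plan is to exploit the fact that in Stage 3 the investment levels $I_L$ and $I_F$ have already been chosen (in Stages 1 and 2) and are therefore fixed constants. On the interior region $0<x_n<1$, each provider's payoff in \eqref{equ:payoffF_2} and \eqref{equ:payoffL_2} is then a quadratic function of that provider's own price alone. First I would write down the first-order conditions $\partial\pi_F/\partial p_F=0$ and $\partial\pi_L/\partial p_L=0$. Differentiating \eqref{equ:payoffF_2} yields $\frac{I_F}{I_L}+p_L-2p_F+c=0$, and differentiating \eqref{equ:payoffL_2} yields $\frac{I_L-I_F}{I_L}+p_F-2p_L+c=0$; these are exactly the two best-response relations of SP$_F$ and SP$_L$.

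Next I would record the second-order conditions: $\partial^2\pi_F/\partial p_F^2=-2<0$ and $\partial^2\pi_L/\partial p_L^2=-2<0$, so each payoff is \emph{strictly concave} in the player's own price. Consequently, on the interior region the first-order condition pins down the unique best response of each SP, and any interior NE must satisfy both first-order conditions simultaneously.

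The core step is to solve the resulting linear system, which has the form $2p_F-p_L=c+\frac{I_F}{I_L}$ and $2p_L-p_F=c+\frac{I_L-I_F}{I_L}$. The coefficient matrix $\bigl(\begin{smallmatrix}2&-1\\-1&2\end{smallmatrix}\bigr)$ has determinant $3\neq0$, so the solution exists and is unique; substitution gives precisely $p_F=c+\frac{I_L+I_F}{3I_L}$ and $p_L=c+\frac{2I_L-I_F}{3I_L}$. Uniqueness of the interior NE then follows from uniqueness of this solution combined with the strict concavity established above.

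Finally, I would verify consistency with the interiority hypothesis. Substituting the candidate prices into $x_n$ from \eqref{equ:xn} gives $p_F-p_L=\frac{2I_F-I_L}{3I_L}$, hence $n_L=\frac{2I_L-I_F}{3I_L}$ and $n_F=\frac{I_L+I_F}{3I_L}$; using $0\le I_F\le I_L$ and $I_L>0$ one checks directly that both quantities lie strictly in $(0,1)$, so the candidate indeed realizes an interior division and the use of the interior payoff forms is self-consistent. The main obstacle I anticipate is exactly this validity check: because the true payoffs are piecewise, with \eqref{equ:payoffF_2}--\eqref{equ:payoffL_2} holding only while $0\le x_n\le 1$, one must confirm that neither SP can profitably deviate by choosing a price that drives $x_n$ to a boundary — i.e. that the unconstrained quadratic maximizer falls inside the valid region and is not cut off by the corner branches of \eqref{equ:EUs_stage4}.
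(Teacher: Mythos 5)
Your plan follows the paper's proof essentially step for step: same first-order conditions, same linear system, same strict-concavity observation, and the same interiority check $x_n=\frac{2I_L-I_F}{3I_L}\in(0,1)$ using $0\le I_F\le I_L$ and $I_L>0$. The one place where you stop short is the step you yourself flag as ``the main obstacle'': ruling out deviations that push $x_n$ onto the corner branches of \eqref{equ:EUs_stage4}. You correctly identify that this check is needed, but you do not supply the argument, so as written the plan has a (small but genuine) gap --- strict concavity of the interior branch only rules out deviations that keep $0<x_n<1$, not deviations into the regions $x_n\le 0$ or $x_n\ge 1$ where the payoff is governed by a different formula.

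The missing piece is short, and it is how the paper closes the argument: each SP's payoff is continuous in its own price across the kinks at $x_n=0$ and $x_n=1$. For SP$_L$, any price with $x_n\le 0$ gives zero EU revenue, which equals the limiting value of the interior branch as $x_n\downarrow 0$ and is therefore dominated by the interior maximum; any price with $x_n\ge 1$ gives revenue $(p_L-c)$, which is increasing in $p_L$, so the best such deviation is the one attaining $x_n=1$ exactly, where the payoff again coincides with the interior branch's boundary value and is dominated by the concave interior maximum at $p^*_L$. The symmetric argument applies to SP$_F$. Adding two or three sentences to this effect would make your proof complete and essentially identical to the paper's.
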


\begin{remark}
Note that $\frac{d p_L}{d I_L}\geq 0$ and $\frac{d p_F}{d I_L}\leq 0$. Thus, as we expect intuitively, $p_L$ (respectively, $p_F$) is increasing (respectively, decreasing) with respect to $I_L$. Also, $p_L$ (respectively, $p_F$) is decreasing (respectively, increasing) with respect to $I_F$. 
\end{remark}

\begin{proof}
In this case, every NE should satisfy the first order condition. Thus, $p^*_F$ and $p^*_L$ should be determined such that $\frac{d \pi_F}{dp_F}=0$ and   $\frac{d \pi_L}{dp_L}=0$.  The first order conditions yield:
$$
\small
\begin{aligned} 
2p^*_F-p^*_L&=\frac{I_F}{I_L}+c \qquad \& \qquad 2p^*_L-p^*_F&=\frac{I_L-I_F}{I_L}+c 
\end{aligned}
$$
Thus,
\begin{eqnarray}\label{equ:optimump}
\small
\begin{aligned}
p^*_F=c+\frac{I_L+I_F}{3I_L} \qquad \& \qquad p^*_L=c+\frac{2I_L-I_F}{3I_L} 
\end{aligned}
\end{eqnarray}
\normalsize
Therefore, $p^*_F$ and $p^*_L$ are the unique NE strategies if they yield $0< x_n< 1$ and no unilateral deviation is profitable for SPs. In Case A, we check the former, and in Case B, we check the latter. 

\textbf{Case A:}
We first  check the condition that with $p^*_L$ and $p^*_F$, $0< x_n< 1$. Using \eqref{equ:xn} and \eqref{equ:optimump}:
$$
\small
x_n=\frac{I_L-I_F}{I_L}+p^*_F-p^*_L= \frac{2I_L-I_F}{3I_L}
$$
, which is greater than zero since $I_L\geq I_F$ and $I_L>0$. $x_n$ is also clearly less than one (note that $I_F\geq 0$). Thus, the condition $0< x_n< 1$ holds.

\textbf{Case B:}
Note that $\frac{d^2 \pi_F}{dp^2_F}<0$ and   $\frac{d^2 \pi_L}{dp^2_L}<0$. Thus, any solutions to the first order conditions would maximize the payoff of the SPs when $0<x_n<1$, and no unilateral deviation by which $0<x_n<1$ would be profitable for SPs. 

Now, we discuss that any deviation by SPs such that $n_L=0$ and $n_L=1$ (which subsequently yields $n_F=1$ and $n_F=0$, respectively) is not profitable. Note that the payoff of SPs, \eqref{equ:payoffF} and \eqref{equ:payoffL}, are continuous  as $n_L\downarrow 0$, and $n_L\uparrow 1$ (which subsequently yields  $n_F\uparrow 1$ and $n_F\downarrow 0$, respectively). Thus, the payoffs of both SPs when selecting $p^*_L$ and $p^*_F$ (solutions of first order conditions) are greater than or equal to the payoffs when $n_L=0$ and $n_L=1$. Thus, any deviation by SPs such that $n_L=0$ or $n_L=1$ is not profitable for SPs. 


This complete the proof that $p^*_F$ and $p^*_L$ are the unique NE strategies by which both SPs are active, i.e. $0<x_n<1$.  
\end{proof}
 
\subsection*{Stage 2:}

In this stage of the game, SP$_F$ decides on the investment level, i.e. the number of resources to be leased from SP$_L$ ($I_F$), with the condition that $I_F\leq I_L$ to maximize $\pi_F$:
\begin{equation} \label{equ:maxpiFIF}
\max_{I_F\leq I_L}\pi_F=\max_{I_F\leq I_L}\big{(}\frac{I_L+I_F}{3I_L}\big{)}^2-sI^2_F
\end{equation} 
Note that for the last equality, we used  \eqref{equ:payoffF_2} and Theorem~\ref{theorem:stage 3}. 

\begin{theorem}
The optimum investment level of SP$_F$ is:
\begin{equation}\label{equ:optimumI_F}
I^*_F=\left\{
		\begin{array}{ll}
			\frac{I_L}{9sI^2_L-1} & \mbox{if } \quad I_L>\sqrt{\frac{2}{9s}}\\
			 I_L & \mbox{if } \quad  I_L\leq \sqrt{\frac{2}{9s}}
		\end{array}
	\right.
	\end{equation}
\end{theorem}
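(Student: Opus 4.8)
The plan is to treat \eqref{equ:maxpiFIF} as a single-variable constrained maximization of $f(I_F) = \left(\frac{I_L+I_F}{3I_L}\right)^2 - sI_F^2$ over the feasible interval $I_F \in [0, I_L]$, with $I_L > 0$ and $s$ held fixed. First I would locate the stationary point from the first-order condition, and then classify it using the sign of the second derivative, which alone controls whether the objective is concave or convex and hence where the maximum sits.

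Differentiating gives $f'(I_F) = \frac{2(I_L+I_F)}{9I_L^2} - 2sI_F$, and setting $f'(I_F)=0$ yields the unique stationary point $\hat{I}_F = \frac{I_L}{9sI_L^2 - 1}$, which is exactly the first branch of \eqref{equ:optimumI_F}. The second derivative is the constant $f''(I_F) = \frac{2}{9I_L^2} - 2s$, so its sign is governed entirely by whether $9sI_L^2$ exceeds $1$: the objective is strictly concave when $9sI_L^2 > 1$ and strictly convex when $9sI_L^2 < 1$.

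The threshold $\sqrt{2/(9s)}$ appears once I impose feasibility of the stationary point. In the concave regime the unconstrained maximizer is $\hat{I}_F$, and $\hat{I}_F \le I_L$ holds precisely when $9sI_L^2 - 1 \ge 1$, i.e.\ $I_L \ge \sqrt{2/(9s)}$; in that case $\hat{I}_F$ is the constrained maximizer. I would then show that in every remaining configuration the maximum is pushed to the boundary $I_F = I_L$: when $9sI_L^2 < 1$ the convex objective is maximized at an endpoint, and since $f'(0) = \frac{2}{9I_L} > 0$ forces $f$ to be increasing throughout $[0,I_L]$, that endpoint is $I_L$; when $1 < 9sI_L^2 < 2$ the objective is concave but $\hat{I}_F > I_L$, so $f$ is still increasing on the feasible interval and again peaks at $I_L$.

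A clean way to package the boundary case is to evaluate the derivative at the right endpoint, $f'(I_L) = \frac{2(2-9sI_L^2)}{9I_L}$, which is nonnegative exactly when $I_L \le \sqrt{2/(9s)}$. Thus for $I_L \le \sqrt{2/(9s)}$ the objective is nondecreasing at $I_L$ and the constrained optimum is $I_F^* = I_L$, while for $I_L > \sqrt{2/(9s)}$ concavity together with $f'(I_L) < 0$ pins the optimum at $\hat{I}_F$. The only genuine bookkeeping obstacle is keeping the concave/convex and feasible/infeasible subcases aligned so that they collapse to precisely the two branches of \eqref{equ:optimumI_F}; the endpoint-derivative observation is what makes the boundary case uniform and lets me avoid comparing $f(0)$ against $f(I_L)$ explicitly.
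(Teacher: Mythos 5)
Your proposal is correct and follows essentially the same route as the paper: locate the stationary point $\hat{I}_F = \frac{I_L}{9sI_L^2-1}$ from the first-order condition, classify the quadratic objective by the sign of its constant second derivative, and observe that $\hat{I}_F \le I_L$ exactly when $9sI_L^2 \ge 2$, which produces the threshold $\sqrt{2/(9s)}$. The only divergence is in the convex subcase $9sI_L^2 < 1$, where the paper explicitly compares the endpoint values $\pi_F|_{I_F=0} = \frac{1}{9}$ and $\pi_F|_{I_F=I_L} = \frac{4}{9} - sI_L^2$, whereas you note $f'(0) = \frac{2}{9I_L} > 0$ and use convexity (so $f'$ is nondecreasing) to conclude the objective is increasing on all of $[0, I_L]$ --- a slightly cleaner way to reach the same conclusion $I_F^* = I_L$.
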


\begin{remark}
If the fee per resources ($s$) or the investment by SP$_L$ ($I_L$) is low, then SP$_F$ reserves all the available resources. If not, then SP$_F$ reserves a fraction of available resources ($I^*_F<I_L$). Note that in this case, $\frac{d I^*_F}{d I_L}<0$. Thus, the higher the number of available resources, the lower would be the number of resources reserved by SP$_F$. 
\end{remark}

\begin{proof}
Note that in \eqref{equ:maxpiFIF}, $\pi_F$ is concave over $I_F$ if $s>\frac{1}{9I^2_L}$, $\pi_F$ is convex if $s<\frac{1}{9I^2_L}$, and is linear if $s=\frac{1}{9I^2_L}$.  We characterize the optimum investment level, i.e. $I^*_F$ under each of these conditions:

\subsection*{$s>\frac{1}{9I^2_L}$:}  In this case, the payoff of SP$_F$ is concave. Thus, the first order condition yields the possible optimum investment level, $\hat{I}_F$:
\begin{equation} \label{equ:hatI_f}
\begin{aligned}
\small
\frac{d \pi_F}{d I_F}|_{\hat{I}_F}=0&\Rightarrow \frac{2(\hat{I}_F+I_L)}{9I^2_L}-2s\hat{I}_F=0\\
&\Rightarrow \hat{I}_F=\frac{I_L}{9sI^2_L-1}
\end{aligned}
\end{equation}
Thus, the optimum investment level, $I^*_F$ is:
\begin{equation}
I^*_F=\min\{\hat{I}_F,I_L\} 
\end{equation}
Note that from the assumption of this case $9sI^2_L>1$. Thus, $\hat{I}_F>0$ (using \eqref{equ:hatI_f}). In addition, from the expression of $\hat{I}_F$ \eqref{equ:hatI_f}:
$$
\hat{I}_F<I_L\iff 9sI^2_L>2
$$ 
Thus, the optimum investment level for SP$_F$ would be:
\begin{equation}
\small
I^*_F=\left\{
		\begin{array}{ll}
			\frac{I_L}{9sI^2_L-1} & \mbox{if } \quad I_L>\sqrt{\frac{2}{9s}}\\
			 I_L & \mbox{if } \quad  I_L\leq \sqrt{\frac{2}{9s}}
		\end{array}
	\right.
\end{equation}
\subsection*{$s<\frac{1}{9I^2_L}$:} In this case, $\pi_F$ would be convex. Thus, the optimum level of investment would be on the boundaries of the feasible set. Thus, $I^*_F$ would be either 0 or $I_L$, whichever yields a higher payoff. Note that:
$$
\begin{aligned}
\pi_F|_{I_F=0}&= \frac{1}{9}\qquad \& \qquad \pi_F|_{I_F=I_L}&=\frac{4}{9}-sI^2_L
\end{aligned}
$$
Thus, $I^*_F=I_L$ if and only if $\frac{4}{9}-sI^2_L\geq \frac{1}{9}$ (Note that we assumed that if $I_F=0$ and $I_F=I_L$ yield the same payoff, then SP$_F$ chooses $I_F=I_L$). Thus, $s\leq \frac{1}{3I^2_L}$ yields $I^*_F=I_L$. Note that from the assumption of the case,  $s<\frac{1}{9I^2_L}<\frac{1}{3I^2_L}$. Thus, $I^*_F=I_L$.

\subsection*{$s=\frac{1}{9I^2_L}$:} In this case, $\pi_F$ would be an increasing linear function of $I_F$. Thus, $I^*_F=I_L$. 

Putting all the cases together, the result of the theorem follows.
\end{proof}

\subsection*{Stage 1:} In this stage, SP$_L$ decides on the level of investment $I_L$ to maximize her payoff, $\pi_L$: 
\begin{equation}\label{equ:optimizationpil}
\max _{I_L}\pi_L=\max_{I_L} (\frac{2I_L-I^*_F}{3I_L})^2+sI^{*2}_F-\gamma I^2_L
\end{equation}
where for the last equality, we used  \eqref{equ:payoffL_2} and Theorem \ref{theorem:stage 3}, and $I^*_F$ is characterized in \eqref{equ:optimumI_F}.  In the next theorem, we characterize the candidate optimum answers:

\begin{theorem}\label{theorem:stage 1}
The optimum solution to \eqref{equ:optimizationpil} is $I^*_L=\min\{\sqrt{\frac{2}{9s}},\hat{I}_L\}$, where $\hat{I}_L$ is the solution of the first order condition on:
\begin{equation} \label{equ:payoff_1storder}
\pi_{L,I^*_F}=\frac{1}{9}(2-\frac{1}{9sI^2_L-1})^2+s\frac{I_L}{9sI^2_L-1}-\gamma I^2_L
\end{equation}
\end{theorem}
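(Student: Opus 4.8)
The plan is to reduce the Stage-1 problem \eqref{equ:optimizationpil} to the maximization of a single, explicit, piecewise-defined function of $I_L$, and then to locate its maximizer by a case analysis around the threshold $u_0:=\sqrt{2/(9s)}$ that already appears in the Stage-2 best response $I_F^*$. Substituting $I_F^*$ from \eqref{equ:optimumI_F} into \eqref{equ:optimizationpil} splits the objective into two branches. For $I_L\le u_0$ we have $I_F^*=I_L$, so $\frac{2I_L-I_F^*}{3I_L}=\tfrac13$ and the payoff collapses to $\pi_L=\tfrac19+(s-\gamma)I_L^2$; for $I_L>u_0$ we have $I_F^*=\frac{I_L}{9sI_L^2-1}$, which produces the expression $\pi_{L,I_F^*}$ stated in the theorem. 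First I would check that the two branches agree at $I_L=u_0$: there $9sI_L^2-1=1$, and both branches reduce to the common value $\tfrac13-\tfrac{2\gamma}{9s}$, so the Stage-1 objective is a continuous function of $I_L$ that may be maximized branch by branch.

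On the low-investment branch the argument is immediate: since $s\ge\gamma$ by assumption, $\pi_L=\tfrac19+(s-\gamma)I_L^2$ is nondecreasing in $I_L$, so over $0<I_L\le u_0$ its supremum is attained at the right endpoint $I_L=u_0$. This branch contains no interior stationary point and contributes only the candidate $u_0$. On the high-investment branch I would impose the first-order condition $\frac{d}{dI_L}\pi_{L,I_F^*}=0$, call its admissible root $\hat I_L$, and argue that this stationary point is the relevant maximizer of that branch. The final step is to glue the branches at $u_0$: by comparing the one-sided derivatives at $u_0$ and tracking the sign of $\frac{d}{dI_L}\pi_{L,I_F^*}$, one decides whether the objective keeps rising past $u_0$ (so the maximizer is the interior point $\hat I_L>u_0$) or turns down at $u_0$ (so the maximizer is $u_0$ itself). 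This dichotomy is precisely what $I_L^*=\min\{u_0,\hat I_L\}$ encodes.

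I expect the main obstacle to be the shape analysis of the high-investment branch. The change of variable $w=9sI_L^2-1$ (so $w>1$ on this branch) turns $\pi_{L,I_F^*}$ into $\frac{1}{9}\bigl(4-\tfrac{3}{w}+\tfrac{2}{w^2}-\tfrac{\gamma}{s}(w+1)\bigr)$, whose second derivative in $w$ is proportional to $2-w$ and hence changes sign at $w=2$. The branch is therefore convex-then-concave rather than globally concave, so a naive ``set the derivative to zero'' step need not by itself single out a global maximizer. The delicate point is to discard the spurious stationary point created by this convex-then-concave behaviour, to identify the admissible root $\hat I_L$, and to confirm that the minimum in the statement selects the correct candidate. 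I would carry this out by analyzing the sign of the derivative directly in $w$ on $(1,\infty)$ and by comparing the boundary value $\pi_{L,I_F^*}(u_0)$ with the interior value $\pi_{L,I_F^*}(\hat I_L)$; controlling this comparison uniformly in the parameters $s,\gamma$ is where I anticipate the real effort, and it is the step most likely to require an additional case split on the size of $\gamma/s$.
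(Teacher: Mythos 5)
Your decomposition is exactly the paper's: split the Stage-1 objective at $u_0=\sqrt{2/(9s)}$ using the Stage-2 best response, observe that on $I_L\le u_0$ the payoff is $\tfrac19+(s-\gamma)I_L^2$ and hence (since $s\ge\gamma$) maximized at the right endpoint, and reduce the branch $I_L>u_0$ to the first-order condition defining $\hat I_L$. Where you go beyond the paper is in rigor rather than route: the paper's proof asserts without justification that the FOC root is ``the only possible optimum answer'' on the second branch, whereas your change of variable $w=9sI_L^2-1$ exposes that this branch is convex-then-concave (second derivative proportional to $2-w$), so the FOC can have a spurious root and a genuine sign/boundary-value comparison is needed; this is a real gap in the paper's argument that your plan would close. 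Two cautions. First, you silently use $sI_F^{*2}$ (squared) in your continuity check and in the $w$-substitution, while the displayed \eqref{equ:payoff_1storder} carries the unsquared term $s\frac{I_L}{9sI_L^2-1}$; your version is the correct substitution of $I_F^*$ into \eqref{equ:optimizationpil}, so you are correcting an apparent typo and should say so explicitly. Second, the dichotomy you describe --- maximizer $\hat I_L$ when the objective keeps rising past $u_0$ (so $\hat I_L>u_0$), maximizer $u_0$ otherwise --- is encoded by $\max\{u_0,\hat I_L\}$, not by $\min\{u_0,\hat I_L\}$ as you (and the theorem statement) write; the $\max$ reading is the one consistent with Corollary~\ref{corollary:outcome_IL>} (which sets $I_L^*=\hat I_L$ when $I_L^*>u_0$) and with the remark that $\sqrt{2/(9s)}$ is the \emph{minimum} optimal investment. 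You should flag this rather than claim the dichotomy is ``precisely what the $\min$ encodes,'' since as literally stated the $\min$ formula contradicts the case analysis you correctly set up.
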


\begin{remark}
Theorem yields that the minimum optimum level of investment by SP$_L$ is $\sqrt{\frac{2}{9s}}$. 
\end{remark}

\begin{remark}\label{remark:optimumIL}
Note that the first term in \eqref{equ:payoff_1storder} is increasing with $I_L$ (since the number of EUs is increasing with $I_L$). The second term is the payment from SP$_F$ which is decreasing with respect to $I_L$ (since when $I_L>\sqrt{\frac{2}{9s}}$, $I^*_F$ is decreasing with respect to $I_L$). The third term is the cost of investment which is decreasing with $I_L$. Thus, when either $s$ or $\gamma$ is sufficiently large, we expect the utility \eqref{equ:payoff_1storder} to be decreasing with respect to $I_L$, and the optimum answer to be $I^*_L=\sqrt{\frac{2}{9s}}$. On the other hand, $\hat{I}_L$ is expected to be the optimum answer when both $s$ and $\gamma$ are sufficiently small. We discuss about this in Numerical Results (Section~\ref{section:numerical}). Numerical results reveal that for large sets of parameters, we expect $I^*_F=\sqrt{\frac{2}{9s}}$. 
\end{remark}

\begin{proof}
To find the optimum level of investment, $I^*_L$, consider two scenarios:
\subsection*{$I_L\leq \sqrt{\frac{2}{9s}}$:} In this case, $I^*_F=I_L$ by \eqref{equ:optimumI_F}. Thus, the maximization \eqref{equ:optimizationpil} would become:
$$
\max_{I_L\leq \sqrt{\frac{2}{9s}}}(s-\gamma) I^2_L
$$
Thus:
\begin{equation}
I^*_L=\left\{
		\begin{array}{ll}
			\sqrt{\frac{2}{9s}} & \mbox{if } \quad s\geq \gamma\\
			 0 & \mbox{if } \quad  s< \gamma
		\end{array}
	\right.
\end{equation}
Note that we assumed that if $I_L>0$ and $I_L=0$ yield the same payoff for  SP$_L$, then this SP chooses $I_L>0$. This is the reason that for $s=\gamma$, $I_L>0$ is chosen. Also, recall that to have a non-trivial problem, we assumed $s\geq \gamma$. Thus, if $I_L\leq \sqrt{\frac{2}{9s}}$, then $I^*_L=\sqrt{\frac{2}{9s}}$. 

\subsection*{$I_L>\sqrt{\frac{2}{9s}}$:} In this case, $I^*_F=\frac{I_L}{9sI^2_L-1}$. Thus, the maximization \eqref{equ:optimizationpil} would become:
\begin{equation}
\small
\max_{I_L>\sqrt{\frac{2}{9s}}}{\pi_{L,I^*_F}=\frac{1}{9}(2-\frac{1}{9sI^2_L-1})^2+s\frac{I_L}{9sI^2_L-1}-\gamma I^2_L}
\end{equation}

Note that $I^*_F$ \eqref{equ:optimumI_F} is continuous at $I_L=\sqrt{\frac{2}{9s}}$. Thus, as $I_L\rightarrow \sqrt{\frac{2}{9s}}$, $\pi_{L,I^*_F}\rightarrow \pi_L|_{I_L=\sqrt{\frac{2}{9s}}}$, which is considered in the previous case.  Thus, the only possible optimum answer is the solution to the first order condition on $\pi_{L,I^*_F}$, i.e. $\hat{I}_L$ if $\hat{I}_L>\sqrt{\frac{2}{9s}}$. Results of the theorem follows.

\end{proof}

\section{The Outcome of the Game} \label{section:outcome}

In this section, we characterize the equilibrium outcome of the game using the results of the previous section and discuss about them. In Corollaries~\ref{corollary:outcome_IL>} and \ref{corollary:outcome_IL<}, we characterize
the equilibrium outcome when $I^*_L>\sqrt{\frac{2}{9s}}$  and $I^*_L\leq \sqrt{\frac{2}{9s}}$, respectively. Note that in each case, there exists a unique outcome and the two cases are mutually exclusive. Thus, the outcome of the game exists and is unique. 

\subsection{If $I^*_L>\sqrt{\frac{2}{9s}}$:}
\begin{corollary} [Outcome A]\label{corollary:outcome_IL>}
If $I^*_L>\sqrt{\frac{2}{9s}}$, then:
\begin{itemize}
\item\textbf{Stage 1:}  The optimum level of investment of SP$_L$ is $I^*_L=\hat{I}_L$ (characterized in Theorem~\ref{theorem:stage 1}).
\item \textbf{Stage 2:} The optimum level of investment by SP$_F$ is: $I^*_F=\frac{I^*_L}{9sI^{* 2}_L-1}$.
\item \textbf{Stage 3:} Prices for EUs are: $p^*_F=c+\frac{1+\frac{1}{9sI^2_L-1}}{3}$ and $p^*_L=c+\frac{2-\frac{1}{9sI^2_L-1}}{3}$. 
\item \textbf{Stage 4:} The fractions of EUs with each SP are: $n^*_F=\frac{1+\frac{1}{9sI^2_L-1}}{3}$ and $n^*_L=\frac{2-\frac{1}{9sI^2_L-1}}{3}$.
\end{itemize}
\end{corollary}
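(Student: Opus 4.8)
The plan is to treat Corollary~\ref{corollary:outcome_IL>} as a bookkeeping consequence of the three stage-wise results already established---Theorem~\ref{theorem:stage 3} (Stage~3 prices), the Stage~2 investment theorem, and Theorem~\ref{theorem:stage 1} (Stage~1 investment)---specialized to the hypothesis $I^*_L>\sqrt{\frac{2}{9s}}$. Because the SPNE is built by backward induction, each stage's equilibrium value is recovered by substituting the optimizers of the later stages, so I would verify the four bullets in the order Stage~1, Stage~2, Stage~3, Stage~4, propagating the value $I^*_L$ forward at each step.

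For Stage~1, Theorem~\ref{theorem:stage 1} gives $I^*_L=\min\{\sqrt{\frac{2}{9s}},\hat{I}_L\}$. Under the hypothesis $I^*_L>\sqrt{\frac{2}{9s}}$ the minimum cannot be attained at $\sqrt{\frac{2}{9s}}$, so it must equal $\hat{I}_L$; this gives the first bullet. For Stage~2, the same hypothesis places us in the regime $I_L>\sqrt{\frac{2}{9s}}$ of the piecewise optimum~\eqref{equ:optimumI_F}, so evaluating its first branch at $I_L=I^*_L$ yields $I^*_F=\frac{I^*_L}{9sI^{*2}_L-1}$, which is the second bullet.

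For Stages~3 and~4 the single algebraic fact I would record is $\frac{I^*_F}{I^*_L}=\frac{1}{9sI^{*2}_L-1}$, immediate from the Stage~2 value. Substituting $I_F=I^*_F$ and $I_L=I^*_L$ into the Stage~3 prices $p^*_F=c+\frac{I_L+I_F}{3I_L}$ and $p^*_L=c+\frac{2I_L-I_F}{3I_L}$ of Theorem~\ref{theorem:stage 3}, and writing each fraction as $\frac{1}{3}(1+\frac{I_F}{I_L})$ and $\frac{1}{3}(2-\frac{I_F}{I_L})$, collapses them to the claimed forms. For Stage~4 I would reuse the indifference location computed in Case~A of the proof of Theorem~\ref{theorem:stage 3}, namely $x_n=\frac{2I_L-I_F}{3I_L}$; since $n_L=x_n$ and $n_F=1-n_L$, the same substitution delivers $n^*_L$ and $n^*_F$.

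There is no genuinely hard step here: the statement follows entirely by chaining the three theorems in backward-induction order. The only points requiring care are notational---the prices and EU fractions in the statement write $I_L$ where they mean the equilibrium $I^*_L$, so I would make this identification explicit---and confirming that the hypothesis $I^*_L>\sqrt{\frac{2}{9s}}$ is exactly what selects the nontrivial branch of both piecewise optima, as opposed to the boundary branch $I^*_L\leq\sqrt{\frac{2}{9s}}$ treated in the companion corollary.
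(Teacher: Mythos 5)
Your proposal is correct and follows essentially the same route as the paper: the corollary is proved by chaining the Stage~1, Stage~2, and Stage~3 results in backward-induction order and substituting equilibrium values, with Stage~4 read off from the indifference location. Your write-up is merely more explicit than the paper's (which omits the Stage~1 branch-selection argument and the intermediate ratio $\frac{I^*_F}{I^*_L}=\frac{1}{9sI^{*2}_L-1}$), but there is no substantive difference in approach.
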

\begin{proof}
The outcome of Stage 2 directly follows from the results of Stage 2. The outcome of Stage 3 follows from \eqref{equ:optimump} and the outcome of Stage 2. The outcome of Stage 4 follows from  \eqref{equ:EUs_stage4} and the outcomes of Stage 2 and 3. 
\end{proof}

Note that in this case, as discussed, $I^*_F$ is decreasing with respect to $I^*_L$ and $s$. In addition, given that $I_L$ is fixed, $p^*_F$ and $p^*_L$ are decreasing and increasing with respect to the per resource fee, $s$, respectively. Also, given that $s$ is fixed, $p^*_F$ and $p^*_L$ are decreasing and increasing with $I^*_L$, respectively.  Regardless of $p^*_F$ (respectively, $p^*_L$) being decreasing (respectively, increasing), the number of EUs with SP$_F$ (respectively, SP$_L$), i.e. $n^*_F$ (respectively, $n^*_L$) is still decreasing (respectively, increasing) with respect to $I^*_L$ (if $s$ fixed). This is because of the increase (respectively, decrease) in $t_F$ (respectively, $t_L$), i.e. the transport cost of SP$_F$ (respectively, SP$_L$). To understand these changes in the transport costs, recall that increasing $I^*_L$, decreases $I^*_F$, and subsequently increases $t_F$.

 In  Section~\ref{section:numerical}, we  observe that $I^*_L$ is dependent on $s$. Thus, the relationship between $s$ and the outcome is more complicated.

\subsection{If $I^*_L\leq \sqrt{\frac{2}{9s}}$:}

\begin{corollary}[Outcome B]\label{corollary:outcome_IL<}
If $I^*_L\leq \sqrt{\frac{2}{9s}}$, then:
\begin{itemize}
\item \textbf{Stage 1:} $I^*_L= \sqrt{\frac{2}{9s}}$.
\item \textbf{Stage 2:} $I^*_F=I^*_L$.
\item \textbf{Stage 3:} $p^*_F=c+\frac{2}{3}$ and $p^*_L=c+\frac{1}{3}$.
\item \textbf{Stage 4:}  $n^*_F=\frac{2}{3}$ and $n^*_L=\frac{1}{3}$.
\end{itemize}
\end{corollary}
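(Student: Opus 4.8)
The plan is to follow the same stage-by-stage template used for Outcome~A in Corollary~\ref{corollary:outcome_IL>}, specializing every quantity to the regime $I^*_L\leq\sqrt{\frac{2}{9s}}$. I would establish the Stage~1 value first, because the remaining three stages are then forced by backward induction through the results already proved, and read off Stages~2 through~4 by direct substitution.

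For Stage~1, I would combine the case hypothesis $I^*_L\leq\sqrt{\frac{2}{9s}}$ with the proof of Theorem~\ref{theorem:stage 1}. In this regime the lower branch of \eqref{equ:optimumI_F} applies, giving $I^*_F=I_L$, so the leader's objective \eqref{equ:optimizationpil} collapses (up to the additive constant $\frac{1}{9}$) to $(s-\gamma)I^2_L$ over the feasible set $I_L\leq\sqrt{\frac{2}{9s}}$. Invoking the standing assumption $s\geq\gamma$, this objective is nondecreasing in $I_L$, so its maximizer on that set is the right endpoint $I^*_L=\sqrt{\frac{2}{9s}}$, using the paper's tie-breaking convention that $I_L>0$ is selected when $s=\gamma$. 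This pins down Stage~1.

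Stage~2 is then immediate: since the resulting $I^*_L=\sqrt{\frac{2}{9s}}$ still satisfies $I_L\leq\sqrt{\frac{2}{9s}}$, the lower branch of \eqref{equ:optimumI_F} gives $I^*_F=I^*_L$. For Stage~3, I would substitute $I_F=I_L$ into the equilibrium prices of Theorem~\ref{theorem:stage 3} (equivalently \eqref{equ:optimump}): the term $\frac{I_L+I_F}{3I_L}$ reduces to $\frac{2}{3}$ and $\frac{2I_L-I_F}{3I_L}$ to $\frac{1}{3}$, yielding $p^*_F=c+\frac{2}{3}$ and $p^*_L=c+\frac{1}{3}$. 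For Stage~4, substituting $I_F=I_L$ into \eqref{equ:xn} and \eqref{equ:EUs_stage4} makes $\frac{I_L-I_F}{I_L}=0$ while $p^*_F-p^*_L=\frac{1}{3}$, so $n^*_L=\frac{1}{3}$ and hence $n^*_F=\frac{2}{3}$.

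The only step requiring genuine care is Stage~1; everything after it is mechanical substitution into already-established formulas, all driven by the single simplification $I^*_F=I^*_L$. The delicate point there is to confirm that the case hypothesis $I^*_L\leq\sqrt{\frac{2}{9s}}$ genuinely forces the boundary value $I^*_L=\sqrt{\frac{2}{9s}}$---i.e.\ that the interior candidate $\hat{I}_L$ from Theorem~\ref{theorem:stage 1} cannot win within this regime (it is the optimum only when $\hat{I}_L>\sqrt{\frac{2}{9s}}$, which is precisely the complementary Outcome~A)---together with correctly applying the boundary tie-breaking rule at $s=\gamma$.
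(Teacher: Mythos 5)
Your proposal is correct and follows essentially the same route as the paper, which simply states that the proof is analogous to that of Corollary~\ref{corollary:outcome_IL>}: Stage~1 comes from the first case in the proof of Theorem~\ref{theorem:stage 1} (where $(s-\gamma)I_L^2$ with $s\geq\gamma$ forces the right endpoint), and Stages~2--4 follow by substituting $I^*_F=I^*_L$ into \eqref{equ:optimumI_F}, \eqref{equ:optimump}, and \eqref{equ:EUs_stage4}. Your version just makes explicit the details the paper leaves implicit.
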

Proof is similar to the proof of the previous corollary. 

In this case,  SP$_F$ reserves all available resources, and  the investment level of SP$_L$ (which is equal to the number of resources reserved by SP$_F$) is a decreasing function of the fee per resources, i.e. $s$. SP$_F$ quotes a higher price for EUs in comparison to SP$_L$. In spite of the higher price, SP$_F$ would be able to attract more EUs given the better investment level in comparison to SP$_L$ which translates into a lower transport cost. 

We  calculate $\pi_L$ and $\pi_F$ using Corollary~\ref{corollary:outcome_IL<} and \eqref{equ:payoffL}:
\begin{equation}
\begin{aligned}
\pi^*_L&=\frac{1}{3}-\frac{2 \gamma}{9 s}\qquad \& \qquad \pi^*_F&=\frac{2}{9}
\end{aligned}
\end{equation}
Thus, the payoff of SP$_L$ would be higher than the payoff of SP$_F$, i.e. $\pi^*_L>\pi^*_F$, if and only if $s>2\gamma$. In this case, although in comparison to SP$_F$, SP$_L$ offers her service to EUs with lower price and  attracts a lower number of EUs, she can still obtain a higher payoff through the per resource fee that she collects from SP$_F$. Thus, SP$_L$ leases the resources to SP$_F$ and instead generates revenue through the fee she charges to SP$_F$. 

\section{Benchmark Case} \label{section:benchmark}

Now, we consider the case in which SP$_L$ does not invest in her infrastructure. Similar to the original model, both SP$_L$ and SP$_F$ compete to attract a pool of undecided users. The hotelling model for EUs is the same as before. The only difference is the transport cost. Here, we consider the transport costs to be parameters, and denote them by $t_L>0$ and $t_F>0$ for SP$_L$ and SP$_F$, respectively. We discuss about the effects of these parameters in the Numerical Results (Section~\ref{section:numerical}).  

The analysis of stage 4 of the game would be the same as before, and by \eqref{equ:xn_t_parameter}:

\begin{equation}\label{equ:xn_banchmark}
\small
x_n=\frac{t_F+p_F-p_L}{t_L+t_F} 
\end{equation}
Thus,
\begin{equation} \label{equ:EUs_stage4_benchmark}
\small
\begin{aligned} 
n_L &=
\left\{
\begin{array}{ll}
	0  & \mbox{if } x_n < 0 \\
	\frac{t_F+p_F-p_L}{t_L+t_F} & \mbox{if } 0\leq x_n \leq 1\\
	1 & \mbox{if } x_n>1
\end{array}
\right. \\
n_F&=1-n_L
\end{aligned}
\end{equation}  

Note that  in the benchmark case, there is no investment decision. Thus, the payoffs of SPs would be modified as:

\begin{equation} \label{equ:payoff_F_bench}
\small
\pi_F=n_F(p_F-c)
\end{equation}
\begin{equation} \label{equ:payoff_L_bench}
\small
\pi_L=n_L(p_L-c)
\end{equation}

Thus, the only decision of SPs is to determine their pricing for EUs. We use the same approach as Stage 3 of the game to characterize the NE pricing strategies for SPs by which $0<x_n<1$:

\begin{theorem}
The NE pricing strategies of SPs are:
\begin{eqnarray}
\small
\begin{aligned}
p^*_F&=c+\frac{2t_L+t_F}{3}\\
p^*_L&=c+\frac{t_L+2t_F}{3} 
\nonumber
\end{aligned}
\end{eqnarray}
\end{theorem}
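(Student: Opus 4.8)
The plan is to mirror the argument used in the proof of Theorem~\ref{theorem:stage 3}, since the only change from Stage~3 is that the investment-dependent transport costs have been replaced by the fixed parameters $t_L$ and $t_F$, and the demand denominator $t_L+t_F$ is no longer normalized to $1$. First I would substitute the benchmark EU-split \eqref{equ:EUs_stage4_benchmark} into the benchmark payoffs \eqref{equ:payoff_F_bench} and \eqref{equ:payoff_L_bench} to express each SP's payoff purely as a function of the price pair $(p_L,p_F)$ over the interior region $0<x_n<1$, obtaining
\[
\pi_L=\frac{t_F+p_F-p_L}{t_L+t_F}(p_L-c),\qquad \pi_F=\frac{t_L+p_L-p_F}{t_L+t_F}(p_F-c).
\]

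Next I would impose the first-order conditions $\frac{d\pi_L}{dp_L}=0$ and $\frac{d\pi_F}{dp_F}=0$. Each payoff is the product of an affine term in the SP's own price with a linear demand term, so differentiating and clearing the common positive factor $\frac{1}{t_L+t_F}$ reduces the stationarity conditions to the linear system $2p^*_L-p^*_F=c+t_F$ and $2p^*_F-p^*_L=c+t_L$. Solving this $2\times 2$ system by elimination yields exactly $p^*_L=c+\frac{t_L+2t_F}{3}$ and $p^*_F=c+\frac{2t_L+t_F}{3}$, the claimed strategies. This step is the direct analogue of \eqref{equ:optimump}, with $\frac{I_F}{I_L}$ and $\frac{I_L-I_F}{I_L}$ replaced by $t_L$ and $t_F$.

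Finally I would verify that this stationary point is the unique interior NE, splitting the check into the same two parts as Theorem~\ref{theorem:stage 3}. For interiority, substituting the candidate prices into \eqref{equ:xn_banchmark} gives $x_n=\frac{t_L+2t_F}{3(t_L+t_F)}$, which lies strictly in $(0,1)$ because $t_L,t_F>0$. For optimality, $\frac{d^2\pi_L}{dp_L^2}=\frac{d^2\pi_F}{dp_F^2}=\frac{-2}{t_L+t_F}<0$, so each payoff is strictly concave in its own price and the first-order solution is the best response within the interior region. The corner deviations (those driving $n_L\downarrow 0$ or $n_L\uparrow 1$) are ruled out by the continuity of the payoffs at the boundary, exactly as in Case~B of the original proof.

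The only point requiring genuine care is this corner analysis, but because it is structurally identical to Case~B of Theorem~\ref{theorem:stage 3} (concavity of each SP's payoff in its own price together with continuity at the boundary of the interior region), no new obstacle arises; the entire computation is a reparametrization of Stage~3 in which the normalization $t_L+t_F=1$ is relaxed to a general positive sum, so the bookkeeping of the denominator $t_L+t_F$ is the main item to track rather than any new idea.
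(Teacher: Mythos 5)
Your proposal is correct and follows essentially the same route as the paper's own proof: first-order conditions giving the linear system $2p^*_L-p^*_F=c+t_F$, $2p^*_F-p^*_L=c+t_L$, then the interiority check $x_n=\frac{2t_F+t_L}{3(t_F+t_L)}\in(0,1)$, concavity in own price, and the continuity argument ruling out corner deviations. The only added value is that you track the denominator $t_L+t_F$ explicitly, which the paper elides by analogy with Theorem~\ref{theorem:stage 3}.
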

\begin{proof}
Proof is similar to the proof of Theorem~\ref{theorem:stage 1}.  $p^*_F$ and $p^*_L$ should be determined such that $\frac{d \pi_F}{dp_F}=0$ and   $\frac{d \pi_L}{dp_L}=0$:
\begin{eqnarray}\label{equ:optimump_benchmark}
\small
\begin{aligned}
p^*_F&=c+\frac{2t_L+t_F}{3}\qquad \& \qquad p^*_L&=c+\frac{2t_F+t_L}{3} 
\end{aligned}
\end{eqnarray}

Therefore, $p^*_F$ and $p^*_L$ are the unique NE strategies if they yield $0< x_n< 1$ and no unilateral deviation is profitable for SPs. In Case A, we check the former, and in Case B, we check the latter. 

\textbf{Case A:}
We first  check the condition that with $p^*_L$ and $p^*_F$, $0< x_n< 1$. Using \eqref{equ:xn_banchmark} and \eqref{equ:optimump_benchmark}:
$$
x_n=\frac{2t_F+t_L}{3(t_F+t_L)}
$$
, which is greater than zero since $t_F>0$ and $t_L>0$. $x_n$ is also clearly less than one. Thus, the condition $0< x_n< 1$ holds.

\textbf{Case B:}
Note that $\frac{d^2 \pi_F}{dp^2_F}<0$ and   $\frac{d^2 \pi_L}{dp^2_L}<0$. Thus, any solutions to the first order conditions would maximize the payoff of the SPs when $0<x_n<1$, and no unilateral deviation by which $0<x_n<1$ would be profitable for SPs. 

Now, we discuss that any deviation by SPs such that $n_L=0$ and $n_L=1$ (which subsequently yields $n_F=1$ and $n_F=0$, respectively) is not profitable. Note that the payoff of SPs, \eqref{equ:payoff_F_bench} and \eqref{equ:payoff_L_bench}, are continuous  as $n_L\downarrow 0$, and $n_L\uparrow 1$ (which subsequently yields  $n_F\uparrow 1$ and $n_F\downarrow 0$, respectively). Thus, the payoffs of both SPs when selecting $p^*_L$ and $p^*_F$ (solutions of first order conditions) are greater than or equal to the payoffs when $n_L=0$ and $n_L=1$. Thus, any deviation by SPs such that $n_L=0$ or $n_L=1$ is not profitable for SPs. 


This complete the proof that $p^*_F$ and $p^*_L$ are the unique NE strategies by which both SPs are active, i.e. $0<x_n<1$.  
\end{proof}

Note that in the absence of  investments, Stage 2 and 1 would be of no importance. In the following corollary, we characterize the outcome of the game in the benchmark case (subscript B stands for Benchmark):
\begin{corollary}
The equilibrium outcome of the benchmark case is as follows:
\begin{itemize}
\item NE pricing strategies are: $p^*_{F,B}=c+\frac{2t_L+t_F}{3}$ and $p^*_{L,B}=c+\frac{t_L+2t_F}{3}$.
\item Fractions of EUs with each SP: $n^*_{L,B}=\frac{2t_F+t_L}{3(t_F+t_L)}$ and $n^*_{F,B}=\frac{t_F+2t_L}{3(t_F+t_L)}$.
\end{itemize}
\end{corollary}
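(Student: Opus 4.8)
The plan is to obtain both bullets by direct substitution, since this corollary merely collects the conclusion of the preceding theorem together with the Stage~4 analysis specialized to the benchmark case. The pricing strategies require no new argument: they are precisely the unique interior NE prices $p^*_F=c+\frac{2t_L+t_F}{3}$ and $p^*_L=c+\frac{t_L+2t_F}{3}$ established in the theorem immediately above, so I would simply restate them, observing that $t_L+2t_F=2t_F+t_L$ so that the displayed form and the form in \eqref{equ:optimump_benchmark} agree.

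For the fractions of EUs, I would substitute the equilibrium prices into the Stage~4 indifference location \eqref{equ:xn_banchmark}. Computing $p^*_F-p^*_L=\frac{(2t_L+t_F)-(t_L+2t_F)}{3}=\frac{t_L-t_F}{3}$ and inserting this into $x_n=\frac{t_F+p_F-p_L}{t_L+t_F}$ gives numerator $t_F+\frac{t_L-t_F}{3}=\frac{2t_F+t_L}{3}$, hence $x_n=\frac{2t_F+t_L}{3(t_F+t_L)}$. This is exactly the value already obtained in Case~A of the theorem's proof, so I can cite that computation rather than redo it. Since the full market coverage assumption (equation \eqref{equ:EUs_stage4_benchmark}) identifies $n_L$ with $x_n$ on the interior range, this yields $n^*_{L,B}=\frac{2t_F+t_L}{3(t_F+t_L)}$. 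The companion fraction follows from $n^*_{F,B}=1-n^*_{L,B}$; carrying out the subtraction over the common denominator $3(t_F+t_L)$ gives $\frac{3(t_F+t_L)-(2t_F+t_L)}{3(t_F+t_L)}=\frac{t_F+2t_L}{3(t_F+t_L)}$, matching the claimed expression.

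There is no genuine obstacle in this argument. The only fact that needs checking beyond the algebra is that the equilibrium prices actually produce an interior split $0<x_n<1$, which licenses the use of the middle branch of \eqref{equ:EUs_stage4_benchmark}; but this is already guaranteed by Case~A of the theorem using $t_L>0$ and $t_F>0$. The entire statement is thus a one-step consequence of the theorem plus the Stage~4 formulas, which is precisely why the paper records it as a corollary, and the brief ``proof is similar'' style remark is fully justified.
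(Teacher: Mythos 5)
Your proof is correct and follows exactly the route the paper intends: the prices are read off from the benchmark theorem, and the EU fractions come from substituting them into \eqref{equ:xn_banchmark} (the very computation already done in Case~A of that theorem's proof) together with $n_F = 1 - n_L$ under full market coverage. The paper omits a proof for this corollary entirely, and your substitution argument is precisely the omitted one-step verification.
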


Using \eqref{equ:payoff_F_bench} and \eqref{equ:payoff_L_bench}, the payoffs of SPs in these cases are:
\begin{equation}
\begin{aligned}
\pi^*_{F,B}&=(\frac{t_F+2t_L}{3(t_F+t_L)})^2\qquad \& \qquad \pi^*_{L,B}&=(\frac{2t_F+t_L}{3(t_F+t_L)})^2
\end{aligned}
\end{equation}

Note that if $t_L>t_F$, then $p^*_{F,B}>p^*_{L,B}$, $n^*_{F,B}>n^*_{L,B}$, and subsequently $\pi^*_{F,B}>\pi^*_{L,B}$, and vice versa. Thus, the SP that EUs have lower reluctance for, receives the highest payoff.

\section{Numerical Results  and Discussions}  \label{section:numerical}

In this section, we use numerical simulations (i) to determine whether and under what conditions the outcome in Corollary~\ref{corollary:outcome_IL>} (Outcome A) would emerge, and (ii)  to provide insights for results under different parameters of the model. For all results, we consider $c=1$.\footnote{Note that the choice of $c$ barely affects the results. It may only shift some of the results (e.g. the price charged to EUs) by only a constant.}

\begin{figure}[t]
	\begin{subfigure}{.25\textwidth}
		\centering
		\includegraphics[width=\linewidth]{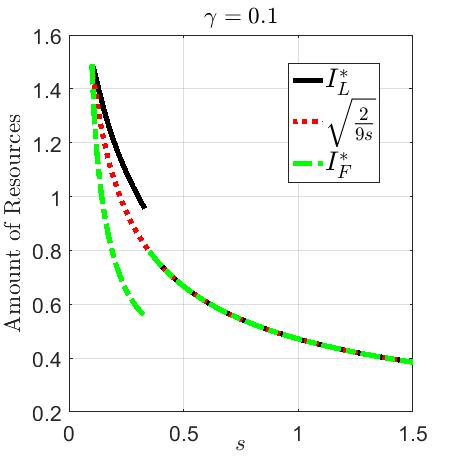}
		\label{figure:resources_gamma0.1}
	\end{subfigure}%
	\begin{subfigure}{.25\textwidth}
		\centering
		\includegraphics[width=\linewidth]{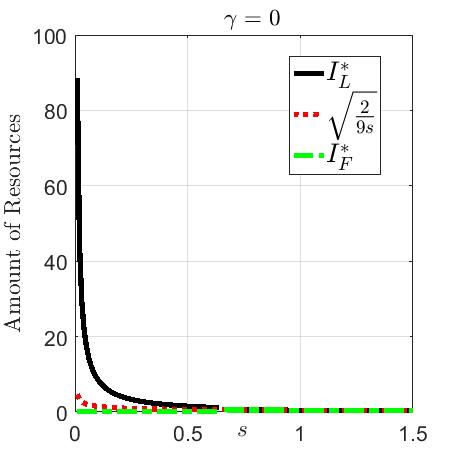}
		\label{fig:resources_gamma0}
	\end{subfigure}
	\caption{Investment decisions of SPs vs. per resource fee, $s$}\label{figure:resources}
\end{figure}

In Figure~\ref{figure:resources}, we plot the optimum level of investment of SP$_L$ ($I^*_L$), the number of resources that SP$_F$ reserves ($I^*_F$), and the minimum optimum level of investment by SP$_L$ ($\sqrt{\frac{2}{9s}}$), when $\gamma=0.1$ (left) and $\gamma=0$ (right) (recall that,  in \eqref{equ:payoffL}, $\gamma$ is the marginal cost of investment.). The discontinuities  are because of the transition of the outcome of the game from the Outcome A to Outcome B (outcome in Corollary~\ref{corollary:outcome_IL<}). Results reveal that for each outcome, $I^*_L$ and $I^*_F$ are decreasing with the per resource fee, $s$. Thus, the higher the fee per resources that SP$_F$ pays to SP$_L$, the lower would be the number of resources that SP$_F$ reserves, and subsequently the lower  would be the investment of SP$_L$.  Also, when the marginal cost of investment ($\gamma$) is zero, SP$_L$ investment is significantly more in comparison to $\gamma=0.1$. Thus, intuitively, the higher the investment cost, the lower would be the investment.


Results also confirm the intuitions presented in Remark~\ref{remark:optimumIL} that for sufficiently small $s$ and $\gamma$, the optimum level of investment by SP$_L$ would be equal to $\hat{I}_L$ (introduced in Theorem~\ref{theorem:stage 1}) which is  higher than $\sqrt{\frac{2}{9s}}$ (Outcome A). Numerical results also reveal that for $\gamma>0.12$, Outcome A would not occur even for small $s$. Thus, the outcome in which $\hat{I}_L$ is the optimum level of investment only occurs for a small set of parameters. Therefore, for a wide range of parameters, we expect Outcome B to be the outcome of the game.

In Figure~\ref{figure:prices}, we plot the pricing decisions of SPs for EUs, i.e. $p^*_L$ and $p^*_F$, for $\gamma=0.1$ (left) and $\gamma=0$ (right). Similar to Figure~\ref{figure:resources},  the discontinuities in the figures are because of the transition of the outcome of the game from Outcome A to B.  Note that in Outcome B (when $s$ is sufficiently large), $p^*_F$ and $p^*_L$ are constant (given that c is a constant) independent of $\gamma$ and $s$, and the price that SP$_F$ charges is twice the price that SP$_L$ charges. However, in Region A, if $\gamma$, i.e. the marginal cost of investment, is extremely small (e.g. zero), then SP$_L$ would be able to charge a higher price than SP$_F$ (Figure~\ref{figure:prices}-right). The reason is that for $\gamma$ small, SP$_L$ will invest more ($I^*_L$ small). We also stated in Section~\ref{section:outcome} after Corollary~\ref{corollary:outcome_IL>} that  $p^*_F$  and $p^*_L$ are decreasing and increasing with $I^*_L$, respectively. Thus, small investment cost yields a higher investment by SP$_L$ and as a result a higher price for EUs of this SP. 

\begin{figure}[t]
	\begin{subfigure}{.25\textwidth}
		\centering
		\includegraphics[width=\linewidth]{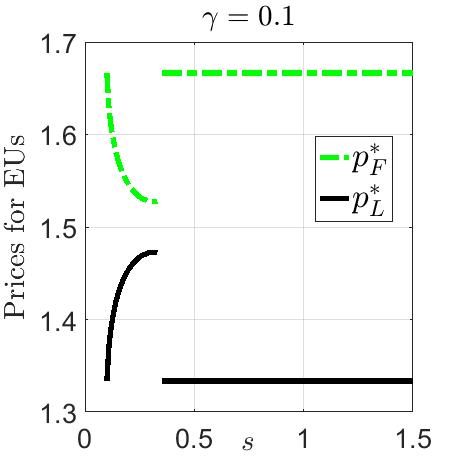}
		\label{fig:prices_gamma01}
	\end{subfigure}%
	\begin{subfigure}{.25\textwidth}
		\centering
		\includegraphics[width=\linewidth]{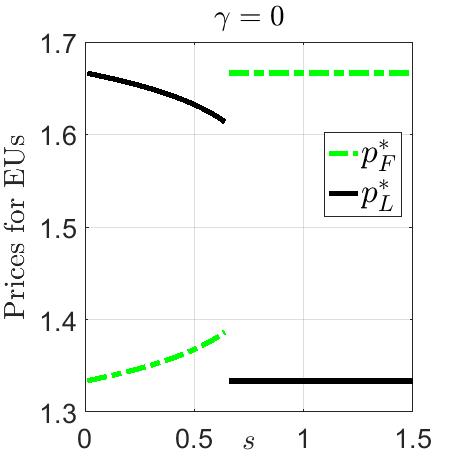}
		\label{fig:prices_gamma0}
	\end{subfigure}
	\caption{Pricing decisions of SPs for EUs versus $s$}\label{figure:prices}
\end{figure}

Also note that  in Outcome A,  depending on $\gamma$, prices can be increasing or decreasing with $s$: If the marginal cost of investment is extremely small, then $p^*_F$ is increasing with $s$ and $p^*_L$ is decreasing with $s$ (Figure~\ref{figure:prices}-right). The opposite is true when $\gamma$ is not small  (Figure~\ref{figure:prices}-left). To describe the reason behind this result, note that from Corollary~\ref{corollary:outcome_IL>}, $p^*_L$ and $p^*_F$ are increasing and decreasing, respectively, with respect to both $s$ and $I^*_L$. On the other hand, $I^*_L$ is itself decreasing with $s$ (Figure~\ref{figure:resources}). Thus, as $s$ increases two factors one increasing (s) and one decreasing ($I^*_L$) affect the prices. Numerical results yield that when $\gamma$ is extremely small (thus, $I^*_L$ is large), the rate of change in $I^*_L$ with $s$ dominates the rate of change in $s$. Thus, as $s$ increases, $p^*_F$ increases and $p^*_L$ decreases. 

Note that by Corollaries \ref{corollary:outcome_IL>} and \ref{corollary:outcome_IL<}, the dependency of $n^*_L$ and $n^*_F$ to parameters of the model is  similar to the dependency of $p^*_L$ and $p^*_F$. The only difference is the exclusion of $c$ from the expressions.
 
In Figure~\ref{figure:payoffs},  we plot the payoffs of SPs, i.e. $\pi^*_F$ and $\pi^*_L$, and the payoff of SP$_L$ in the benchmark case for three scenarios (i) $t_F=0.5$ and $t_L=0.5$, (ii) $t_F=1$ and $t_L=0$, and (iii) $t_F=0$ and $t_L=1$. Results reveal that although for some parameters, the price that SP$_F$ quotes for EUs and the number of EUs that she can attract is higher than the price  of SP$_L$  and the number of EUs of this SP, SP$_L$ can still fetch a higher payoff than SP$_F$. The reason is the fee that SP$_F$ pays to SP$_L$.   
 
\begin{figure}[t]
	\centering
	\includegraphics[width=0.4\textwidth]{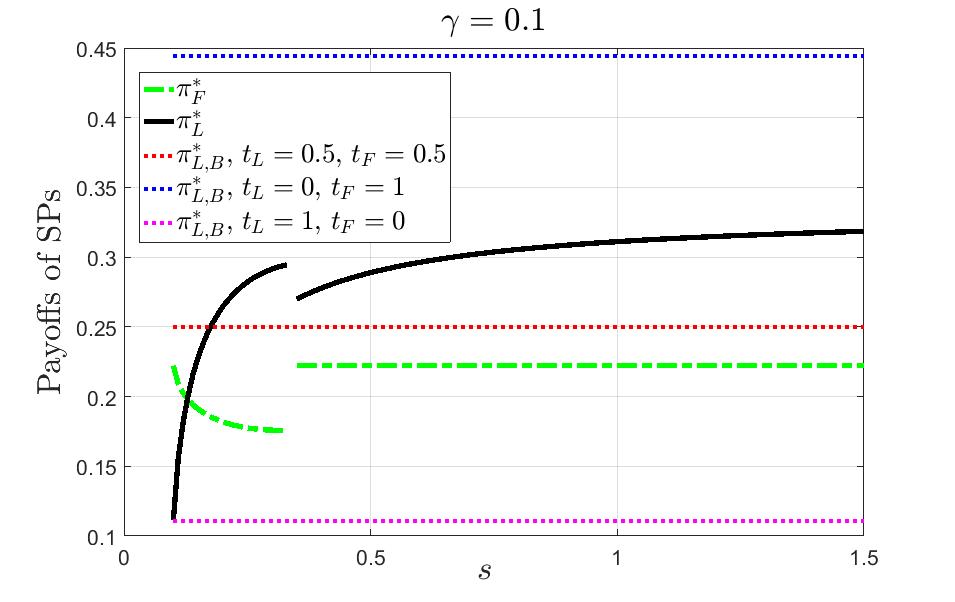}
	\caption{Payoffs of SPs for EUs versus $s$}
	\label{figure:payoffs}
\end{figure}
 
In addition, we  can only expect an investment by SP$_L$ if this investment yields a higher payoff than the benchmark case with no investment. Results in Figure~\ref{figure:payoffs} yield that the higher the transport cost of SP$_L$ in comparison to SP$_F$, the lower would be the payoff of SP$_L$ in the benchmark case, and the higher would be the incentive of this SP to invest and lease her resources to SP$_F$. Thus, the preferences of EUs for SPs, i.e. the transport costs, are one of the main factors in gauging the plausibility of a cooperation between an MNO and an MVNO. If an MNO knows that EUs are reluctant joining her, i.e. EUs have a high transport costs for the MNO, then she may invest in her infrastructure, lease a portion of the new resources to an independent MVNO, which EUs have lower reluctance for, and collect the profit through the fees she charges to the MVNO.

In Figure~\ref{figure:payoffs}, note that if $s$ is small, then SP$_F$ fetches a higher payoff than SP$_L$ since she can get a higher revenue from EUs (Figure~\ref{figure:prices}-left and recalling that $n^*_F$ has the same pattern as $p^*_F$). Overall, increasing $s$ increases the payoff of SP$_L$. Counter intuitively, when $s$ is large enough such that Outcome B occurs (Corollary~\ref{corollary:outcome_IL<}) SP$_F$ receives a higher payoff in comparison to her payoffs when $s$ is small. This occurs since in this case, SP$_F$ reserves all the available resources from SP$_L$. This  enables SP$_F$ to charge a high price for EUs while still attracting EUs. Thus, when $s$ is large enough, both SPs receive high revenue at the expense of EUs paying more. Thus, the welfare of EUs would be low. This highlights the importance  of a framework by which the per resource fee is determined such that all entities of the market can benefit. This framework could be a bargaining game between SPs with some restrictions imposed by a regulator. This is a topic of future work. 

\bibliographystyle{IEEEtran}
\bibliography{bmc_article}

\end{document}